\def\isdraft{0}
\tikzset{every state/.style={minimum size=0pt}}
\newtheorem{theorem}{Theorem}
\newtheorem{lemma}[theorem]{Lemma}
\newtheorem{proposition}[theorem]{Proposition}
\theoremstyle{definition} 
\newtheorem{convention}[theorem]{Convention}
\newtheorem{definition}[theorem]{Definition}
\newtheorem{example}[theorem]{Example}
\newtheorem{remark}[theorem]{Remark}
\newtheorem{warning}[theorem]{Warning}
\newcommand{\righttherefore}{:\joinrel\cdot\,}
\title{
	Similarity-based analogical proportions
}
\author{
	Christian Anti\'c
}
\address{
	christian.antic@icloud.com\\
	Vienna University of Technology\\
	Vienna, Austria
}
\begin{document}

\begin{abstract}
	The author has recently introduced abstract algebraic frameworks of analogical proportions and similarity within the general setting of universal algebra. The purpose of this paper is to build a bridge from similarity to analogical proportions by formulating the latter in terms of the former. The benefit of this similarity-based approach is that the connection between proportions and similarity is built into the framework and therefore evident which is appealing since proportions and similarity are both at the center of analogy; moreover, future results on similarity can directly be applied to analogical proportions.
\end{abstract}

\maketitle

\section{Introduction and preliminaries}

The author has recently introduced an abstract algebraic framework of analogical proportions of the form ``$a$ is to $b$ what $c$ is to $d$'' written $a:b::c:d$ in the general setting of universal algebra with appealing mathematical properties \citeA{Antic22}. It has recently been applied to logic program synthesis in \citeA{Antic23-23} and it has been studied for monounary algebras in \citeA{Antic22-2}.

In parallel, the author has recently introduced an abstract algebraic model of similarity based on the idea that the set of generalizations of an element contains important information about the properties of that element \cite{Antic23-2}. For example, the term $2x$ is a generalization of an integer $a$ iff $a$ is even, and $x^2$ is a generalization of $a$ iff $a$ is a square number.

The \textbf{purpose of this paper} is to combine the two aforementioned frameworks by defining analogical proportions in terms of similarity thus putting similarity at the center of proportions. The benefit of this similarity-based approach is that the connection between proportions and similarity is built into the framework and therefore evident --- this is not the case for the previous definition given in \citeA{Antic22}. This is appealing since proportions and similarity are \textit{both} at the center of analogy. Most importantly, it allows us to directly apply future results obtained for similarity to analogical proportions.

We compare the two mentioned approaches in \prettyref{§:Comparison} and notice that there are subtle differences. Concretely, while the framework in \citeA{Antic22} always satisfies inner p-reflexivity $a:a::c:c$, it fails in the similarity-based framework of this paper justified by a reasonable counterexample in \prettyref{e:aabb}. Moreover, the Uniqueness Lemma, which is a key result in \citeA{Antic22}, fails here as well (see \prettyref{w:UL}). On the other hand, in \prettyref{§:ITs} we show that the Isomorphism Theorems in \citeA{Antic22} can be smoothly transferred to similarity-based setting showing that proportions are compatible with structure-preserving mappings.


We assume the reader to be fluent in basic universal algebra as it is presented for example in \citeA[§II]{Burris00}.

A \textit{\textbf{language}} $L$ of algebras is a set of \textit{\textbf{function symbols}}\footnote{We omit constant symbols as we identify constants with 0-ary functions.} together with a \textit{\textbf{rank function}} $r:L\to\mathbb N$, and a denumerable set $X$ of \textit{\textbf{variables}} distinct from $L$. Terms are formed as usual from variables in $X$ and function symbols in $L$, and we denote the set of all such $LX$-terms by $T_{L,X}$. We denote the variables occurring in a term $s$ by $Xs$. The \textit{\textbf{rank}} of a term $s$ is given by the number of its variables and denoted by $rs$.

An \textit{\textbf{$L$-algebra}} $\mathfrak A$ consists of a non-empty set $A$, the \textit{\textbf{universe}} of $\mathfrak A$, and for each function symbol $f\in L$, a function $f^\mathfrak A:A^{rf}\to A$, the \textit{\textbf{functions}} of $\mathfrak A$ (the \textit{\textbf{distinguished elements}} of $\mathfrak A$ are the 0-ary functions). Every term $s$ induces a function $s^\mathfrak A$ on $\mathfrak A$ in the usual way.  We call a term $t$ \textit{\textbf{injective}} in $\mathfrak A$ iff $t^\mathfrak A$ is an injective function.

A \textit{\textbf{homomorphism}} is a mapping $H: \mathfrak{A\to B}$ such that for any function symbol $f\in L$ and elements $a_1,\ldots,a_{rf}\in A$,
\begin{align*} 
	Hf^ \mathfrak A(a_1,\ldots,a_{rf})=f^ \mathfrak B(Ha_1,\ldots,Ha_{rf}).
\end{align*} An \textit{\textbf{isomorphism}} is a bijective homomorphism.

In the rest of the paper, $L$ denotes a language of algebras and $\mathfrak A$ and $\mathfrak B$ are $L$-algebras.

\section{Similarity-based analogical proportions}\label{§:SAPs}

This is the main section of the paper. Here we shall introduce similarity-based analogical proportions based on algebraic similarity, which we shall now briefly recall.

\begin{definition} A \textit{\textbf{generalization}} of an element $a\in A$ in $\mathfrak A$ is an $LX$-term $s$ such that $a=s^ \mathfrak A\textbf{o}$, for some $\mathbf o\in A^{rs}$, and we denote the set of all such generalizations by $\uparrow_ \mathfrak A a$. Moreover, we define, for elements $a\in A$ and $b\in B$,
\begin{align*} 
	a\uparrow_{ \mathfrak{(A,B)}} b:=(\uparrow_ \mathfrak A a)\cap (\uparrow_ \mathfrak B b).
\end{align*} We will often omit the indices in case the underlying algebras are understood.
\end{definition}

\begin{definition}[\citeA{Antic23-2}] We define the \textit{\textbf{similarity relation}} as follows:
\begin{enumerate}
	\item A generalization is \textit{\textbf{trivial}} in $\mathfrak{(A,B)}$ iff it generalizes all elements in $A$ and $B$ and we denote the set of all such trivial generalizations by $\emptyset_ \mathfrak{(A,B)}$. 

	\item Now we say that $a\lesssim b$ \textit{\textbf{holds}} in $\mathfrak{(A,B)}$ --- in symbols,
	\begin{align*} 
		a\lesssim_{ \mathfrak{(A,B)}} b,
	\end{align*} iff
	\begin{enumerate}
		\item either $(\uparrow_ \mathfrak A a)\cup (\uparrow_ \mathfrak B b)$ consists only of trivial generalizations; or
		\item $a\uparrow_{ \mathfrak{(A,B)}} b$ contains at least one non-trivial generalization and is maximal with respect to subset inclusion among the sets $a\uparrow_{ \mathfrak{(A,B)}} b'$, $b'\neq a\in B$, that is, for any element $b'\neq a\in B$,
		\begin{align*} 
			\emptyset_ \mathfrak{(A,B)}\subsetneq a\uparrow_{ \mathfrak{(A,B)}} b\subseteq a\uparrow_{ \mathfrak{(A,B)}} b'
		\end{align*} implies
		\begin{align*} 
			\emptyset_ \mathfrak{(A,B)}\subsetneq a\uparrow_{ \mathfrak{(A,B)}} b'\subseteq a\uparrow_{ \mathfrak{(A,B)}} b.
		\end{align*} We abbreviate the above definition by simply saying that $a\uparrow_{ \mathfrak{(A,B)}} b$ is \textit{\textbf{$b$-maximal}}.
	\end{enumerate}

	\item Finally, the \textit{\textbf{similarity relation}} is defined as
	\begin{align*} 
		a\approx_\mathfrak{(A,B)} b 
			\quad:\Leftrightarrow\quad a\lesssim_{ \mathfrak{(A,B)}} b \quad\text{and}\quad b\lesssim_{\mathfrak{(B,A)}} a,
	\end{align*} in which case we say that $a$ and $b$ are \textit{\textbf{similar}} in $\mathfrak{(A,B)}$.
\end{enumerate} We will always write $\mathfrak A$ instead of $\mathfrak{(A,A)}$.
\end{definition}

We now analyze the following basic properties:
\begin{align*} 
	& a\approx_ \mathfrak A a \quad\text{(reflexivity)},\\
	& a\approx_ \mathfrak{(A,B)} b \quad\Leftrightarrow\quad b\approx_{\mathfrak{(B,A)}} a \quad\text{(symmetry)},\\
	& a\approx_ \mathfrak{(A,B)} b \quad\text{and}\quad b\approx_\mathfrak{(B,C)} c \quad\Rightarrow\quad a\approx_\mathfrak{(A,C)} c \quad\text{(transitivity)}. 
\end{align*}

\begin{proposition}[\citeA{Antic23-2}]\label{p:g-properties} The similarity relation is reflexive, symmetric, and in general not transitive.
\end{proposition}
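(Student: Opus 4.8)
The plan is to treat the three assertions separately; reflexivity and symmetry come straight from the definitions, while the real content is a counterexample to transitivity. Symmetry is immediate: by definition $a\approx_{\mathfrak{(A,B)}} b$ abbreviates the conjunction ``$a\lesssim_{\mathfrak{(A,B)}} b$ and $b\lesssim_{\mathfrak{(B,A)}} a$'', and $b\approx_{\mathfrak{(B,A)}} a$ abbreviates ``$b\lesssim_{\mathfrak{(B,A)}} a$ and $a\lesssim_{\mathfrak{(A,B)}} b$'', which is the same statement, so the equivalence needs no argument. For reflexivity I would first note that $a\uparrow_{\mathfrak{(A,A)}} a=(\uparrow_{\mathfrak A} a)\cap(\uparrow_{\mathfrak A} a)=\uparrow_{\mathfrak A} a$ and then split on the clauses defining $\lesssim$: if $\uparrow_{\mathfrak A} a$ is entirely trivial, clause (a) gives $a\lesssim_{\mathfrak{(A,A)}} a$; otherwise $a\uparrow_{\mathfrak{(A,A)}} a$ is non-trivial and automatically $a$-maximal, since for every $b'\neq a$ we have $a\uparrow_{\mathfrak{(A,A)}} b'=(\uparrow_{\mathfrak A} a)\cap(\uparrow_{\mathfrak A} b')\subseteq\uparrow_{\mathfrak A} a=a\uparrow_{\mathfrak{(A,A)}} a$, so the full generalization set can never be properly contained in a competitor. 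In either case $a\lesssim_{\mathfrak{(A,A)}} a$, hence $a\approx_{\mathfrak A} a$.

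The substantive part, and the main obstacle, is non-transitivity, which forces me to exhibit algebras whose generalization sets are incomparable in just the right pattern. I would stay inside a single algebra (so $\mathfrak A=\mathfrak B=\mathfrak C$) over the language $L=\{f,g\}$ of two unary symbols, on the universe $A=\{1,2,3\}$ with $f(1)=1$, $f(2)=2$, $f(3)=1$ and $g(1)=2$, $g(2)=2$, $g(3)=3$. Because every function symbol is unary, each term carries a single variable, and a short induction on word length shows that the image of any non-identity term equals one of $\{1,2\}$ (realized by $f$), $\{2,3\}$ (by $g$), or $\{2\}$ (by $g\circ f$); in particular $x$ is the only trivial generalization. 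Hence, apart from $x$, the set $\uparrow_{\mathfrak A} 1$ consists exactly of the terms of image $\{1,2\}$, the set $\uparrow_{\mathfrak A} 3$ of the terms of image $\{2,3\}$, and $\uparrow_{\mathfrak A} 2$ of all terms (the extra image $\{2\}$ only enlarges $\uparrow_{\mathfrak A} 2$).

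With these sets pinned down the three similarity checks reduce to routine subset comparisons. The intersections $1\uparrow_{\mathfrak A} 2$ and $2\uparrow_{\mathfrak A} 3$ are non-trivial and maximal --- the competing intersections are in each case either trivial or incomparable to them --- so $1\approx_{\mathfrak A} 2$ and $2\approx_{\mathfrak A} 3$. By contrast $1\uparrow_{\mathfrak A} 3=(\uparrow_{\mathfrak A} 1)\cap(\uparrow_{\mathfrak A} 3)$ contains only the trivial term $x$, while $\uparrow_{\mathfrak A} 1$ is non-trivial; therefore neither clause of the definition can establish $1\lesssim_{\mathfrak A} 3$, and $1\not\approx_{\mathfrak A} 3$. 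This yields $1\approx_{\mathfrak A} 2$ and $2\approx_{\mathfrak A} 3$ but $1\not\approx_{\mathfrak A} 3$, refuting transitivity.

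I expect the delicate point to be the engineering of the algebra rather than any single computation: the design must place $1$ and $3$ in disjoint images ($\mathrm{im}\,f$ versus $\mathrm{im}\,g$) with the middle element $2$ in both, and one must check that the finitely many composite terms $f^{n}$, $g^{n}$, $g\circ f$, $f\circ g,\dots$ produce no unexpected generalization that would spoil the maximality bookkeeping. Once incomparability of $\uparrow_{\mathfrak A} 1$ and $\uparrow_{\mathfrak A} 3$ together with their comparability to $\uparrow_{\mathfrak A} 2$ is secured, the rest is mechanical. (Alternatively, one may simply invoke the counterexample already recorded in \citeA{Antic23-2}.)
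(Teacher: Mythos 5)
Your proposal is correct, but note that the paper itself offers no proof of this proposition: it is stated as an imported result, attributed to \citeA{Antic23-2}, so there is nothing in this paper to compare your argument against line by line. Your reflexivity and symmetry arguments are exactly the routine unfoldings of the definitions one would expect (the observation that $a\uparrow_{\mathfrak{(A,A)}}b'\subseteq\ \uparrow_{\mathfrak A}a=a\uparrow_{\mathfrak{(A,A)}}a$ for every competitor $b'$ is the right way to dispose of the maximality clause). The counterexample to transitivity also checks out: with $f(1)=1$, $f(2)=2$, $f(3)=1$ and $g(1)=2$, $g(2)=2$, $g(3)=3$, the reachable images of composite unary terms are exactly $\{1,2\}$, $\{2,3\}$ and $\{2\}$ (closing $\{1,2,3\}$ under the induced maps on subsets confirms this), so $x$ is the unique trivial generalization, $\uparrow_{\mathfrak A}2$ is everything, and $\uparrow_{\mathfrak A}1$, $\uparrow_{\mathfrak A}3$ are incomparable with trivial intersection; hence $1\approx_{\mathfrak A}2$ and $2\approx_{\mathfrak A}3$ hold by vacuous maximality while $1\lesssim_{\mathfrak A}3$ fails both clauses of the definition. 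This is the same general strategy one would expect from the cited source (a small finite algebra in which the middle element is ``similar'' to two mutually dissimilar extremes), and your write-up is self-contained where the paper merely cites. One small presentational caveat: you should state explicitly that you are reading the quantifier ``$b'\neq a\in B$'' in the maximality clause literally; your verification happens to go through under either that reading or the plausible alternative ``$b'\neq b$'', but saying so would preempt the objection.
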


Computing all generalizations is often difficult which fortunately can be omitted in many cases:

\begin{definition}[\citeA{Antic23-2}]\label{d:char_jus_lesssim} We call a set $G$ of $LX$-terms a \textit{\textbf{characteristic set of generalizations}} of $a \lesssim b$ in $\mathfrak{(A,B)}$ iff $G$ is a sufficient set of generalizations, that is, iff
\begin{enumerate}
	\item $G\subseteq a\uparrow_\mathfrak{(A,B)} b$, and
	\item $G\subseteq a\uparrow_\mathfrak{(A,B)}b'$ implies $b'=b$, for each $b'\neq a\in B$.
\end{enumerate} In case $G=\{s\}$ is a singleton set satisfying both conditions, we call $s$ a \textit{\textbf{characteristic generalization}} of $a\lesssim b$ in $\mathfrak{(A,B)}$.
\end{definition}

We now wish to define analogical proportions in terms of algebraic similarity. For this, we need the following auxiliary notion:

\begin{definition} We define the $L$-algebra $Arr( \mathfrak A)$ as follows: the universe of $Arr( \mathfrak A)$ consists of \textit{\textbf{arrows}} $a\to b$ with $a,b\in A$, and its functions are given by the functions of $\mathfrak A$ generalized to arrows component-wise. 
\end{definition}

\begin{definition} The \textit{\textbf{set of generalizations}} of an arrow $a\to b$ in $Arr( \mathfrak A)$ is given by
\begin{align*} 
	\uparrow_{Arr( \mathfrak A)}(a\to b)= \left\{s\to t\in T_{L,X}\to T_{L,X} \;\middle|\; a\to b=s^ \mathfrak A\textbf{o}\to t^ \mathfrak A\textbf{o},\text{ for some $\mathbf o\in A^{rs}$} \right\}.
\end{align*} We call elements of $(a\to b)\uparrow_{ \mathfrak{(A,B)}} (c\to d)$ \textit{\textbf{justifications}} of $a\to b\lesssim c\to d$ in $\mathfrak{(A,B)}$.
\end{definition}

\begin{convention} We make the convention that $\to$ binds weaker than every other algebraic operation except for $\lesssim$ and $\approx$ which have the weakest binding.
\end{convention}

We are now ready to introduce the main notion of the paper:

\begin{definition}\label{d:abcd} 

Given a pair of $L$-algebras $ \mathfrak{(A,B)}$, and elements $a,b\in A$ and $c,d\in B$, we define the \textit{\textbf{similarity-based analogical proportion relation}} in $\mathfrak{(A,B)}$ by\footnote{We read $a:b\approx c:d$ as ``$a$ is to $b$ is similar as $c$ is to $d$''.}
\begin{align*} 
	a:b\approx_\mathfrak{(A,B)} c:d \quad:\Leftrightarrow\quad 
		&a\to b\approx_{Arr( \mathfrak A)Arr( \mathfrak B)} c\to d,\\
		&b\to a\approx_{Arr( \mathfrak A)Arr( \mathfrak B)} d\to c,
\end{align*} which translates into
\begin{align*} 
	a:b\approx_\mathfrak{(A,B)} c:d \quad\Leftrightarrow\quad 
		&a\to b \lesssim_{Arr( \mathfrak A)Arr( \mathfrak B)} c\to d,\\
		&b\to a \lesssim_{Arr( \mathfrak A)Arr( \mathfrak B)} d\to c,\\
		&c\to d \lesssim_{Arr( \mathfrak B)Arr( \mathfrak A)} a\to b,\\ 
		&d\to c \lesssim_{Arr( \mathfrak B)Arr( \mathfrak A)} b\to a.
\end{align*}
\end{definition}

\begin{convention} With a slight abuse of notation, in what follows we will not distinguish between $\mathfrak A$ and $Arr( \mathfrak A)$ and we will often omit the reference to the underlying algebra in case it is understood from the context.
\end{convention}

Notice that \prettyref{d:char_jus_lesssim} directly yields a notion of a characteristic set of justifications (see \prettyref{d:char_jus}):

\begin{definition} A set $J$ of justifications of $a\to b\lesssim c\to d$ is a \textit{\textbf{characteristic set of justifications}} iff
\begin{itemize}
	\item $J\subseteq (a\to b)\uparrow_{ \mathfrak{(A,B)}}(c\to d)$;
	\item $J\subseteq (a\to b)\uparrow_{ \mathfrak{(A,B)}}(c'\to d')$ implies $c'\to d'=c\to d$, for all arrows $c'\to d'$ in $\mathfrak B$.
\end{itemize} In case $J=\{s\to t\}$ is a singleton set satisfying both conditions, we call $s\to t$ a \textit{\textbf{characteristic justification}} of $a\to b\lesssim c\to d$ in $\mathfrak{(A,B)}$.
\end{definition}

\begin{example}\label{e:exa} First consider the algebra $\mathfrak A_1:=(\{a,b,c,d\})$, consisting of four distinct elements with no functions and no constants:
\begin{center}
\begin{tikzpicture} 
    \node (a)               {$a$};
    \node (b) [above=of a]  {$b$};
    \node (c) [right=of a]  {$c$};
    \node (d) [above=of c]  {$d$};
\end{tikzpicture}
\end{center} Since $\uparrow_{\mathfrak A_1}(x\to y)\ \cup \uparrow_{\mathfrak A_1}(x'\to y')$ contains only trivial justifications for \textit{any distinct} elements $x,x',y,y'\in A'$, we have, for example:
\begin{align*} 
    a:b\approx_{\mathfrak A_1} c:d \quad\text{and}\quad a:c\approx_{\mathfrak A_1} b:d.
\end{align*} On the other hand, since
\begin{align*} 
    \uparrow_{\mathfrak A_1}(a\to a)\ \cup \uparrow_{\mathfrak A_1}(a\to d)=\{x\to x\}\neq\emptyset
\end{align*} and
\begin{align*} 
    \emptyset=\ \uparrow_{\mathfrak A_1}(a\to a\righttherefore a\to d)\subsetneq\ \uparrow_{\mathfrak A_1}(a\to a\righttherefore a\to a)=\{x\to x\},
\end{align*} we have
\begin{align*} 
    a\to a\not\lesssim_{\mathfrak A_1} a\to d,
\end{align*} which implies
\begin{align*} 
    a:a\not\approx_{\mathfrak A_1} a:d.
\end{align*}

Now consider the slightly different algebra $\mathfrak A_2:=(\{a,b,c,d\},f)$, where $f$ is the unary function defined by (we omit the loops $fx:=x$ for $x\in\{b,c,d\}$ in the figure):
\begin{center}
\begin{tikzpicture} 
	\node (a)               {$a$};
	\node (b) [above=of a,yshift=1cm]  {$b$};
	\node (c) [right=of a,xshift=1cm]  {$c$};
	\node (d) [right=of b,xshift=1cm]  {$d$};

	\draw[->] (a) to [edge label'={$f$}] (b);
\end{tikzpicture}
\end{center} We expect $a:b\approx c:d$ to fail in $\mathfrak A_2$ as it has no non-trivial justification. In fact, 
\begin{align*} \uparrow_{\mathfrak A_2}(a\to b)\ \cup \uparrow_{\mathfrak A_2}(c\to d)=\left\{x\to f^n x \;\middle|\; n\geq 1\right\}\neq\emptyset 
\end{align*} whereas
\begin{align*} 
    (a\to b)\uparrow_{\mathfrak A_2}(c\to d)=\emptyset
\end{align*} show
\begin{align*} a:b\not\approx_{\mathfrak A_2} c:d.
\end{align*}

In the algebra $\mathfrak A_3$ given by
\begin{center}
\begin{tikzpicture} 
    \node (a)               {$a$};
    \node (b) [above=of a,yshift=1cm]  {$b$};
    \node (c) [right=of a,xshift=1cm]  {$c$};
    \draw[->] (a) to [edge label'={$f$}] (b);
    \draw[->] (a) to [edge label'={$g$}] (c);
    \draw[->] (b) to [edge label'={$f,g$}] [loop] (b);
    \draw[->] (c) to [edge label'={$f,g$}] [loop] (c);
\end{tikzpicture}
\end{center} we have
\begin{align*} 
    a:b\not\approx_{\mathfrak A_3} a:c.
\end{align*} The intuitive reason is that $a:b\approx b:b$ is a more plausible proportion than $a:b\approx a:c$, which is reflected in the following computation:
\begin{align*} 
    \emptyset=(a\to b)\uparrow_{\mathfrak A_3}(a\to c)\subsetneq (a\to b)\uparrow_{\mathfrak A_3}(b\to b)=\{x\to fx,\ldots\}.
\end{align*}
\end{example}

\section{Properties}\label{§:Properties}

In the tradition of the ancient Greeks, \citeA{Lepage03} introduced (in the linguistic context) a set of properties as a guideline for formal models of analogical proportions and his list has since been modified and extended by a number of authors and can now be summarized as follows:\footnote{\citeA{Lepage03} uses different names for his properties --- we have decided to remain consistent with the nomenclature in \citeA[§4.2]{Antic22}.} 
\begin{align}
    &a:b::_ \mathfrak A a:b \quad\text{(p-reflexivity)},\\
    &a:b::_\mathfrak{(A,B)}c:d \quad\Leftrightarrow\quad c:d::_{\mathfrak{(B,A)}}a:b\quad\text{(p-symmetry)},\\
    &a:b::_\mathfrak{(A,B)}c:d \quad\Leftrightarrow\quad b:a::_\mathfrak{(A,B)}d:c\quad\text{(inner p-symmetry)},\\
    &a:a::_\mathfrak A a:d \quad\Leftrightarrow\quad d=a\quad\text{(p-determinism)},\\
    &a:a::_\mathfrak{(A,B)}c:c \quad\text{(inner p-reflexivity)},\\
    &a:b::_ \mathfrak A c:d \quad\Leftrightarrow\quad a:c::_ \mathfrak A b:d \quad\text{(central permutation)},\\
    &a:a::_ \mathfrak A c:d \quad\Rightarrow\quad d=c \quad\text{(strong inner p-reflexivity)},\\
    &a:b::_ \mathfrak A a:d \quad\Rightarrow\quad d=b \quad\text{(strong p-reflexivity)}.
\end{align} Moreover, the following property is considered, for $a,b\in A\cap B$:
\begin{align}
    a:b::_\mathfrak{(A,B)} b:a\quad\text{(p-commutativity).}
\end{align} 

Furthermore, the following properties are considered, for $L$-algebras $\mathfrak{A,B,C}$ and elements $a,b\in A$, $c,d\in B$, $e,f\in C$:
\begin{prooftree}
    \AxiomC{$a:b::_\mathfrak{(A,B)} c:d$}
        \AxiomC{$c:d::_\mathfrak{(B,C)} e:f$}
        \RightLabel{(p-transitivity),}
    \BinaryInfC{$a:b::_\mathfrak{(A,C)} e:f$}
\end{prooftree} and, for elements $a,b,e\in A$ and $c,d,f\in B$, the property
\begin{prooftree}
    \AxiomC{$a:b::_\mathfrak{(A,B)} c:d$}
    \AxiomC{$b:e::_\mathfrak{(A,B)} d:f$}
    \RightLabel{(inner p-transitivity),}
    \BinaryInfC{$a:e::_\mathfrak{(A,B)} c:f$}
\end{prooftree} and, for elements $a\in A$, $b\in A\cap B$, $c\in B\cap C$, and $d\in C$, the property
\begin{prooftree}
    \AxiomC{$a:b::_\mathfrak{(A,B)}b:c$}
    \AxiomC{$b:c::_\mathfrak{(B,C)}c:d$}
    \RightLabel{(central p-transitivity).}
    \BinaryInfC{$a:b::_\mathfrak{(A,C)}c:d$}
\end{prooftree} Notice that central p-transitivity follows from p-transitivity. 

We have the following analysis of the proportional axioms within the framework of this paper:

\begin{theorem}\label{t:properties} The similarity-based analogical proportion relation as defined in \prettyref{d:abcd} satisfies
\begin{itemize}
    \item p-reflexivity,
    \item p-symmetry,
    \item inner p-symmetry,
    \item p-determinism,
\end{itemize}  and, in general, it does not satisfy
\begin{itemize}
    \item inner p-reflexivity,
    \item central permutation,
    \item strong inner p-reflexivity,
    \item strong p-reflexivity,
    \item p-commutativity,
    \item p-transitivity,
    \item inner p-transitivity,
    \item central p-transitivity.
\end{itemize}
\end{theorem}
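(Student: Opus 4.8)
\section*{Proof plan}

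The plan is to separate the statement into the four properties that hold and the eight that fail, establishing the former by reduction to the underlying similarity relation and refuting the latter by explicit small algebras, several of which are already at hand from \prettyref{e:exa}. The three ``symmetry-flavoured'' positive properties are essentially formal. For p-reflexivity I would merely unfold \prettyref{d:abcd}: $a:b\approx_{\mathfrak A} a:b$ amounts to $a\to b\approx_{Arr(\mathfrak A)} a\to b$ together with $b\to a\approx_{Arr(\mathfrak A)} b\to a$, both instances of reflexivity of $\approx$ from \prettyref{p:g-properties}. For p-symmetry I would line up the two defining conditions of $a:b\approx_{\mathfrak{(A,B)}} c:d$ against those of $c:d\approx_{\mathfrak{(B,A)}} a:b$ and apply symmetry of $\approx$ row by row. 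Inner p-symmetry is cheaper still: the pair of conditions defining $b:a\approx_{\mathfrak{(A,B)}} d:c$ is exactly the pair defining $a:b\approx_{\mathfrak{(A,B)}} c:d$ with its two lines interchanged, so the equivalence is immediate.

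The one positive property requiring genuine work, and the step I expect to be the main obstacle, is p-determinism. Here I would assume $a:a\approx_{\mathfrak A} a:d$ and force $d=a$. The crux is that the diagonal arrow $a\to a$ always carries the generalization $x\to x$, which is non-trivial as soon as $|A|>1$ (it generalizes $e\to e'$ only when $e=e'$) and which fails to generalize $a\to d$ for $d\neq a$. Unfolding $a\to a\lesssim_{Arr(\mathfrak A)} a\to d$ and running the triviality-versus-maximality dichotomy of the similarity relation should rule out $d\neq a$ by a short case split: the triviality clause is excluded because $x\to x$ is a non-trivial generalization of $a\to a$; and in the maximality clause, if $(a\to a)\uparrow(a\to d)$ contains a non-trivial generalization then the diagonal $a\to a$ is a strictly better competitor, since $(a\to a)\uparrow(a\to d)\subsetneq\ \uparrow_{Arr(\mathfrak A)}(a\to a)$ via the extra element $x\to x$, contradicting maximality, while if $(a\to a)\uparrow(a\to d)$ has no non-trivial generalization the maximality clause fails outright. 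This is precisely the computation already performed for $\mathfrak A_1$ in \prettyref{e:exa}, now run in general; the degenerate case $|A|=1$ is immediate. The delicate point I would watch is that the maximality comparison be carried out against the correct family of competitors, the diagonal being the decisive one.

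Finally, the eight failing properties call for counterexamples, and the algebras of \prettyref{e:exa} together with small variants supply them. Inner p-reflexivity is refuted by the example announced in \prettyref{e:aabb}. For strong inner p-reflexivity and strong p-reflexivity I would lean on the breakdown of uniqueness recorded in \prettyref{w:UL}: as soon as a single source can be completed to a proportion by two distinct targets, both strong-reflexivity implications collapse, and an \prettyref{e:exa}-style algebra (a finite set carrying one or two unary operations) already exhibits such non-uniqueness. Central permutation and p-commutativity I would refute by a direct instance in a monounary algebra, computing the two relevant arrow-generalization sets and checking that one proportion holds while its permuted, respectively commuted, partner does not. The three transitivity principles I would trace back to \prettyref{p:g-properties}, which already records that $\approx$ is not transitive: realizing a non-transitive $\approx$-chain inside the arrow algebras breaks p-transitivity, the same device applied in the second components breaks inner p-transitivity, and a dedicated three-algebra instance is needed for central p-transitivity, whose failure must be exhibited directly since the stated implication runs from p-transitivity to it and not conversely. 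In each case the only substantive labour is the bookkeeping of generalization sets and $b$-maximality, exactly as in \prettyref{e:exa}.
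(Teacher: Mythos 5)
Your positive half coincides with the paper's proof: p-reflexivity and p-symmetry from \prettyref{p:g-properties}, inner p-symmetry by permuting the two defining conditions, and p-determinism via the observation that $x\to x$ is a non-trivial justification of $a\to a$ that fails to justify $a\to d$ for $d\neq a$, so that the triviality clause is blocked and the diagonal dominates $(a\to a)\uparrow(a\to d)$; your explicit case split on whether $(a\to a)\uparrow(a\to d)$ contains a non-trivial justification is, if anything, slightly more careful than the paper's. Likewise, inner p-reflexivity via \prettyref{e:aabb}, strong p-reflexivity via a functionless algebra, and monounary counterexamples for central permutation and p-commutativity are exactly the paper's devices, modulo the fact that you never actually write the algebras down.

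The genuine gap is strong inner p-reflexivity. You propose to obtain $a:a\approx c:d$ with $c\neq d$ from ``non-uniqueness in an \prettyref{e:exa}-style algebra'', but a functionless (or generic) algebra cannot supply this: there $(a\to a)\uparrow(c\to d)$ contains only trivial justifications while $\uparrow(a\to a)$ contains the non-trivial $x\to x$, so $a\to a\not\lesssim c\to d$ by precisely the dichotomy you run for p-determinism --- indeed \prettyref{e:exa} computes $a:a\not\approx_{\mathfrak A_1}a:d$. What is needed is an algebra in which the diagonal arrow $a\to a$ shares non-trivial justifications with a genuinely non-diagonal arrow $c\to d$; the paper takes $(\{a,c,d\},f)$ with $fa=a$ and $f$ swapping $c$ and $d$, so that $\uparrow(c\to d)=\ \uparrow(d\to c)$ is contained in $\uparrow(a\to a)$ and the maximality clause goes through. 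Your appeal to \prettyref{w:UL} does not substitute for this construction, since that warning concerns characteristic justifications rather than the existence of such a proportion. A similar, milder reservation applies to the three transitivity failures: non-transitivity of $\approx$ recorded in \prettyref{p:g-properties} does not by itself break p-transitivity of the proportion relation, because each link requires all four arrow similarities to hold while only the composite must fail; the paper's two-function algebras with overlapping supports of $g$ and $h$ are the actual content here and must be exhibited, not merely promised as bookkeeping.
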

\begin{proof} We have the following positive proofs:
\begin{itemize}
    \item p-Symmetry and p-reflexivity follow from \prettyref{p:g-properties}.
    \item Inner p-reflexivity is am immediate consequence of the definition.
    \item Next, we prove p-determinism. ($\Leftarrow$) Inner p-reflexivity implies
    \begin{align*} 
        a:a\approx a:a.
    \end{align*} $(\Rightarrow)$ Since $x\to x\in\ \uparrow(a\to a)$, the set $$\uparrow(a\to a)\ \cup \uparrow(a\to d)$$ cannot consist only of trivial justifications. We clearly have
    \begin{align*} 
		(a\to a)\uparrow(a\to d)
			&=\ \uparrow(a\to a)\ \cap\uparrow(a\to d)\\
			&\subseteq\ \uparrow(a\to a)\\
			&=(a\to a)\uparrow(a\to a).
	\end{align*} On the other hand, we have
    \begin{align*} 
        x\to x\in (a\to a)\uparrow(a\to a)
    \end{align*} whereas
    \begin{align*} 
        x\to x\not\in (a\to a)\uparrow(a\to d),\quad\text{for all $d\neq a$}.
    \end{align*} This shows
    \begin{align*} 
        (a\to a)\uparrow(a\to d)\subsetneq (a\to a)\uparrow (a\to a),
    \end{align*} which implies
    \begin{align*} 
        a:a\not\approx a:d,\quad\text{for all $d\neq a$.}
    \end{align*}
\end{itemize} 

We have the following negative proofs (all but the first one are similar to the corresponding proofs of Theorem 28 in \citeA{Antic22}):
\begin{itemize}
    \item Inner p-reflexivity fails, for example, in the algebra $(\{a,b,c\},f)$ given by
    \begin{center}
	\begin{tikzpicture} 
	    \node (a) {$a$};
	    \node (b) [right of=a,xshift=1cm] {$b$};
	    \node (c) [right of=b,xshift=1cm] {$c$};
	    \draw[->] (a) to [edge label'={$f$}][loop] (a);
	    \draw[->] (b) to [edge label={$f$}] (c);
	    \draw[->] (c) to [edge label'={$f$}][loop] (c);
	\end{tikzpicture}
	\end{center} since we have
	\begin{align*} 
	    (a\to a)\uparrow(b\to b)=\{x\to x\}\subsetneq\left\{f^m x\to f^n x \;\middle|\; m,n\geq 0\right\}=(a\to a)\uparrow(c\to c)
	\end{align*} and thus $a\to a\not\lesssim b\to b$ which implies $a:a\not\approx b:b$.

    \item Central permutation fails, for example, in the algebra $(\{a,b,c,d\},f)$ given by (we omit the loops $fx:=x$ for $x\in\{b,c,d\}$ in the figure)
    \begin{center}
        \begin{tikzpicture} 
        \node (a)               {$a$};
        \node (b) [above=of a]  {$b$};
        \node (c) [right=of a]  {$c$};
        \node (d) [right=of b]  {$d$};
        \draw[->] (a) to [edge label'={$f$}] (c);
    \end{tikzpicture}
    \end{center} since we clearly have
    \begin{align*} 
		a:b\approx c:d \quad\text{whereas}\quad	a:c\not\approx b:d.
	\end{align*}

    \item Strong inner p-reflexivity fails, for example, in the algebra $(\{a,c,d\},f)$ given by
    \begin{center}
    \begin{tikzpicture} 
        \node (a)               {$a$};
        \node (c) [right=of a]  {$c$};
        \node (d) [above=of c]  {$d$};
        \draw[->] (a) to [edge label'={$f$}] [loop] (a);
        \draw[<->] (c) to [edge label'={$f$}] (d);
    \end{tikzpicture}
    \end{center} since
    \begin{align*} 
		\uparrow(c\to d)=\ \uparrow(d\to c)
	\end{align*} shows that
	\begin{align*} 
		(a\to a)\uparrow (c\to d)&=\ \uparrow(c\to d),\\
		(a\to a)\uparrow (d\to c)&=\ \uparrow(d\to c),
	\end{align*} and thus
	\begin{align*} 
		a:a\approx c:d.
	\end{align*}

    \item Strong p-reflexivity fails, for example, in the algebra $(\{a,b,c\})$ having no functions, since we have
    \begin{align*} 
		a:b\approx_{(\{a,b,c\})} a:d.
	\end{align*}

    \item p-Commutativity fails in the algebra $(\{a,b\},f)$ given by
    \begin{center}
    \begin{tikzpicture} 
        \node (a)               {$a$};
        \node (b) [right=of a]  {$b$};
        \draw[->] (a) to [edge label={$f$}] (b);
        \draw[->] (b) to [edge label'={$f$}] [loop] (b);
    \end{tikzpicture}
    \end{center} since
    \begin{align*}
        \uparrow(a\to b)\ \cup \uparrow(b\to a)=\{x\to fx,\ldots\}\neq\emptyset,
    \end{align*} whereas
    \begin{align*} 
        (a\to b)\uparrow(b\to a)=\emptyset.
    \end{align*}

    \item p-Transitivity fails in the algebra $(\{a,b,c,d,e,f\},g,h)$ given by (we omit the loops $gx:=x$ for $x\in\{b,d,e,f\}$, and $hx:=x$ for $x\in\{a,b,d,f\}$, in the figure)
    \begin{center}
    \begin{tikzpicture} 
        \node (a)               {$a$};
        \node (b) [right=of a]  {$b$};
        \node (c) [right=of b, xshift=0.8cm]  {$c$};
        \node (d) [right=of c]  {$d$};
        \node (e) [right=of d, xshift=0.8cm]  {$e$};
        \node (f) [right=of e]  {$f$.};
        \draw[->] (a) to [edge label={$g$}] (b);
        \draw[->] (c) to [edge label={$g,h$}] (d);
        \draw[->] (e) to [edge label={$h$}] (f);
    \end{tikzpicture}
    \end{center} The arrow proportions $a\to b\lesssim c\to d$ and $c\to d\lesssim a\to b$ follow from the maximality of
    \begin{align*} 
		(a\to b)\uparrow (c\to d)= \left\{x\to g^nx \;\middle|\; n\geq 1\right\},
	\end{align*} and the arrow proportions $b\to a\lesssim d\to c$ and $d\to c\lesssim b\to a$ follow from the maximality of
    \begin{align*} 
        (b\to a)\uparrow (d\to c)= \left\{g^nx\to x \;\middle|\; n\geq 0\right\}.
    \end{align*} This shows
    \begin{align*} 
		a:b\approx c:d.
	\end{align*} An analogous argument shows
	\begin{align*} 
		c:d\approx e:f.
	\end{align*} On the other hand,
    \begin{align*} 
        \uparrow(a\to b)\ \cup \uparrow(e\to f)\neq\emptyset \quad\text{whereas}\quad (a\to b)\uparrow(e\to f)=\emptyset
    \end{align*} shows
    \begin{align*} 
		a:b\not\approx e:f.
	\end{align*}

    \item Inner p-transitivity fails, for example, in the algebra $(\{a,b,c,d,e,f\},g)$ given by (we omit the loops $gx:=x$, for $x\in\{b,e,c,d,f\}$, in the figure)
    \begin{center}
    \begin{tikzpicture} 
        \node (a) {$a$};
        \node (b) [above=of a] {$b$};
        \node (e) [left=of b] {$e$};
        \node (c) [right=of a,xshift=1cm] {$c$};
        \node (d) [above=of c] {$d$};
        \node (f) [right=of d] {$f$};
        \draw[->] (a) to [edge label={$g$}] (e);
    \end{tikzpicture}
    \end{center} We clearly have
    \begin{align*} 
		a:b\approx c:d \quad\text{and}\quad b:e\approx d:f.
	\end{align*} On the other hand, 
    \begin{align*} 
        \uparrow(a\to e)\ \cup \uparrow(c\to f)\neq\emptyset \quad\text{whereas}\quad (a\to e)\uparrow (c\to f)=\emptyset
    \end{align*} shows
    \begin{align*} 
		a:e\not\approx c:f.
	\end{align*}

    \item Central p-transitivity fails, for example, in the algebra $(\{a,b,c,d\},g,h)$ given by (we omit the loops $gx:=x$ for $x\in\{c,d\}$, and $hx:=x$ for $x\in\{a,d\}$, in the figure)
    \begin{center}
    \begin{tikzpicture} 
        \node (a)               {$a$};
        \node (b) [right=of a]  {$b$};
        \node (c) [right=of b]  {$c$};
        \node (d) [right=of c]  {$d$};
        \draw[->] (a) to [edge label={$g$}] (b);
        \draw[->] (b) to [edge label={$g,h$}] (c);
        \draw[->] (c) to [edge label={$h$}] (d);
    \end{tikzpicture}
    \end{center} The proof is analogous to the above disproof of p-transitivity.
\end{itemize}
\end{proof}


\section{Isomorphism Theorems}\label{§:ITs}

In this section, we transfer the Isomorphism and Homomorphism Theorems in \citeA{Antic22,Antic23-22} to the similarity-based setting. For this, we first recall the following lemma; see \citeA[Homomorphism Lemma]{Antic23-22} and \citeA[Isomorphism Lemma]{Antic22}:

\begin{lemma}[Isomorphism Lemma]\label{l:IL} For any homomorphism $H: \mathfrak{A\to B}$ and $a,b\in A$,
\begin{align}\label{equ:ab_subseteq_HaHb}  
    \uparrow_ \mathfrak A(a\to b)\subseteq\ \uparrow_ \mathfrak B(Ha\to Hb).
\end{align} In case $H$ is an isomorphism, we have
\begin{align}\label{equ:ab=HaHb} 
    \uparrow_ \mathfrak A(a\to b)=\ \uparrow_ \mathfrak B(Ha\to Hb).
\end{align}
\end{lemma}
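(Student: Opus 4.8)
The plan is to reduce everything to the single standard fact that homomorphisms commute with term evaluation. Concretely, for every $LX$-term $s$, every homomorphism $H:\mathfrak{A\to B}$, and every assignment $\mathbf o=(o_1,\dots,o_{rs})\in A^{rs}$, one has
\begin{align*}
	H(s^{\mathfrak A}\mathbf o)=s^{\mathfrak B}(H\mathbf o),\qquad\text{where }H\mathbf o:=(Ho_1,\dots,Ho_{rs}).
\end{align*}
This is proved by a routine structural induction on $s$: the base case $s=x_i$ holds because $s^{\mathfrak A}\mathbf o=o_i$, while the step $s=f(s_1,\dots,s_{rf})$ follows by applying the defining homomorphism equation to the outermost symbol $f$ and then the induction hypothesis to each subterm. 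Since the paper defines homomorphisms only through the function symbols, this induction is exactly what lifts the property from symbols to arbitrary terms, and it is the conceptual heart of the lemma.

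For the inclusion \prettyref{equ:ab_subseteq_HaHb}, I would take an arbitrary justification $s\to t\in\ \uparrow_{\mathfrak A}(a\to b)$, which by definition supplies an assignment $\mathbf o\in A^{rs}$ with $a=s^{\mathfrak A}\mathbf o$ and $b=t^{\mathfrak A}\mathbf o$. Applying $H$ to both equalities and invoking the commutation fact with the transported assignment $H\mathbf o\in B^{rs}$ yields
\begin{align*}
	Ha=H(s^{\mathfrak A}\mathbf o)=s^{\mathfrak B}(H\mathbf o)\quad\text{and}\quad Hb=H(t^{\mathfrak A}\mathbf o)=t^{\mathfrak B}(H\mathbf o),
\end{align*}
which is exactly the assertion that $s\to t\in\ \uparrow_{\mathfrak B}(Ha\to Hb)$. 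As $s\to t$ was arbitrary, the inclusion follows.

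For the equality \prettyref{equ:ab=HaHb}, I would use that the inverse $H^{-1}:\mathfrak{B\to A}$ of an isomorphism is again a homomorphism. Applying the inclusion just proved to $H^{-1}$ and to the elements $Ha,Hb\in B$ gives
\begin{align*}
	\uparrow_{\mathfrak B}(Ha\to Hb)\subseteq\ \uparrow_{\mathfrak A}\bigl(H^{-1}(Ha)\to H^{-1}(Hb)\bigr)=\ \uparrow_{\mathfrak A}(a\to b),
\end{align*}
using $H^{-1}(Ha)=a$ and $H^{-1}(Hb)=b$; combined with \prettyref{equ:ab_subseteq_HaHb} this produces both inclusions and hence equality.

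I do not expect a genuine obstacle here, since the statement concerns only the raw generalization sets and never the maximality or triviality conditions built on top of them --- so no case analysis on $\lesssim$ or $\approx$ is required. The one point needing a moment of care is the convention implicit in $\uparrow_{Arr(\mathfrak A)}(a\to b)$ that a \emph{single} assignment $\mathbf o$ simultaneously witnesses both components of the arrow; I would note that feeding the \emph{same} transported assignment $H\mathbf o$ into both $s$ and $t$ preserves this coupling, so the witness survives the transport intact.
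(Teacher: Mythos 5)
Your proposal is correct and follows the standard route: the paper itself does not reprove this lemma but imports it from the cited Homomorphism/Isomorphism Lemmas, whose arguments are exactly your reduction to the commutation of homomorphisms with term evaluation (by structural induction) plus applying the resulting inclusion to $H^{-1}$ for the isomorphism case. Your remark that the single witness $\mathbf o$ transports to the single witness $H\mathbf o$ for both components of the arrow is the right point of care and is handled correctly.
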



\begin{theorem}[First Isomorphism Theorem]\label{t:FIT} For any homomorphism $H: \mathfrak{A\to B}$ and elements $a,b\in A$, we have the following implication:
\begin{prooftree}
    \AxiomC{$\uparrow_ \mathfrak A(a\to b)=\emptyset \quad\Rightarrow\quad \uparrow_ \mathfrak B(Ha\to Hb)=\emptyset$}
    \RightLabel{.}
    \UnaryInfC{$a\to b \lesssim_\mathfrak{(A,B)} Ha\to Hb$}
\end{prooftree} In case $H$ is an isomorphism, we have
\begin{align}\label{equ:abHaHb} 
    a:b\approx_\mathfrak{(A,B)} Ha:Hb,\quad\text{for all $a,b\in A$}.
\end{align}
\end{theorem}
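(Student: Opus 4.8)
The plan is to reduce the whole statement to a single set identity supplied by the Isomorphism Lemma and then read off the defining clauses of $\lesssim$. First I would record the consequence of \prettyref{l:IL} that, since $\uparrow_\mathfrak A(a\to b)\subseteq\ \uparrow_\mathfrak B(Ha\to Hb)$, the set of justifications collapses to
\begin{align*}
    (a\to b)\uparrow_\mathfrak{(A,B)}(Ha\to Hb)=\ \uparrow_\mathfrak A(a\to b)\cap\ \uparrow_\mathfrak B(Ha\to Hb)=\ \uparrow_\mathfrak A(a\to b).
\end{align*}
Moreover, for \emph{every} arrow $c'\to d'$ in $\mathfrak B$ we have $(a\to b)\uparrow(c'\to d')=\ \uparrow_\mathfrak A(a\to b)\cap\ \uparrow_\mathfrak B(c'\to d')\subseteq\ \uparrow_\mathfrak A(a\to b)$, so $\uparrow_\mathfrak A(a\to b)$ is the \emph{largest} justification set available. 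This is the engine of the whole proof.

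To prove the implication, I would verify the defining condition of $a\to b\lesssim Ha\to Hb$ using the above. The $(Ha\to Hb)$-maximality is essentially automatic: if $\uparrow_\mathfrak A(a\to b)\subseteq(a\to b)\uparrow(c'\to d')$ for some $c'\to d'\neq a\to b$, then the displayed inclusion forces $(a\to b)\uparrow(c'\to d')=\ \uparrow_\mathfrak A(a\to b)$, which is exactly the reverse inclusion required. Hence the only way $a\to b\lesssim Ha\to Hb$ can fail is the single configuration in which the justification set $\uparrow_\mathfrak A(a\to b)$ carries no non-trivial element (so the maximality clause does not apply) while the union $\uparrow_\mathfrak A(a\to b)\cup\uparrow_\mathfrak B(Ha\to Hb)$ does carry one (so the trivial-union clause also fails); such a non-trivial element would necessarily come from $\uparrow_\mathfrak B(Ha\to Hb)$. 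The hypothesis $\uparrow_\mathfrak A(a\to b)=\emptyset\Rightarrow\ \uparrow_\mathfrak B(Ha\to Hb)=\emptyset$ is precisely what excludes this configuration, so one of the two clauses of $\lesssim$ always holds.

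For the isomorphism statement I would first note that \prettyref{l:IL} now gives the equality $\uparrow_\mathfrak A(a\to b)=\ \uparrow_\mathfrak B(Ha\to Hb)$, so the hypothesis of the implication holds vacuously in both directions, and the same remains true after replacing $H$ by the isomorphism $H^{-1}\colon\mathfrak B\to\mathfrak A$. It then suffices to apply the implication four times to match the four arrow sub-relations of \prettyref{d:abcd}: applying it to $H$ at $(a,b)$ and at $(b,a)$ yields $a\to b\lesssim Ha\to Hb$ and $b\to a\lesssim Hb\to Ha$, while applying it to $H^{-1}$ at $(Ha,Hb)$ and at $(Hb,Ha)$ yields $Ha\to Hb\lesssim a\to b$ and $Hb\to Ha\lesssim b\to a$. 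These four relations are exactly the definition of $a:b\approx_\mathfrak{(A,B)}Ha:Hb$.

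I expect the main obstacle to be the bookkeeping of triviality. The predicate ``trivial'' is defined relative to a pair of algebras, so I must carefully distinguish a justification that is trivial in $\mathfrak A$, in $\mathfrak B$, and in the pair $\mathfrak{(A,B)}$, and check that the hypothesis, phrased with $\uparrow_\mathfrak A$ and $\uparrow_\mathfrak B$ separately, lines up with the trivial-union clause of $\lesssim$, which is phrased for the pair. Once this correspondence is pinned down, the maximality half is immediate and the remainder of the argument is routine.
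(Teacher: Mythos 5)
Your proof is correct, and the first implication is argued exactly as in the paper: the Isomorphism \prettyref{l:IL} collapses $(a\to b)\uparrow_{\mathfrak{(A,B)}}(Ha\to Hb)$ to $\uparrow_\mathfrak A(a\to b)$, which is the largest justification set attainable against any arrow of $\mathfrak B$, so $(Ha\to Hb)$-maximality is automatic and the hypothesis is needed only to exclude the one remaining failure mode (non-trivial union, trivial intersection). Your worry about triviality bookkeeping is well placed but harmless: the paper reads ``$=\emptyset$'' as ``consists only of trivial justifications,'' i.e.\ ``$\subseteq\emptyset_{\mathfrak{(A,B)}}$,'' and under that reading your case analysis closes, since a non-trivial element of the union that is absent from the intersection must lie in $\uparrow_\mathfrak B(Ha\to Hb)\setminus\uparrow_\mathfrak A(a\to b)$, which the hypothesis forbids. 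For the isomorphism clause you take a slightly different route: the paper re-runs the argument from scratch (trivial-union case, then the equality of \prettyref{l:IL} to place a non-trivial justification in the intersection, then the same inclusion chain for maximality), whereas you derive all four arrow relations of \prettyref{d:abcd} by applying the already-proved implication to $H$ at $(a,b)$ and $(b,a)$ and to $H^{-1}$ at $(Ha,Hb)$ and $(Hb,Ha)$. That reduction is valid --- $H^{-1}$ is a homomorphism whose hypothesis holds vacuously by the equality in \prettyref{l:IL} --- and it is arguably the cleaner organization, since it avoids duplicating the maximality computation.
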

\begin{proof} The first implication is shown in essentially the same way as the same implication in the proof of \citeA[Homomorphism Theorem]{Antic23-22}: By the Isomorphism \prettyref{l:IL} we have
\begin{align*} 
    \uparrow_{ \mathfrak{(A,B)}}(a\to b\righttherefore Ha\to Hb)=\ \uparrow_ \mathfrak A(a\to b)\ \cap \uparrow_ \mathfrak B(Ha\to Hb)=\ \uparrow_ \mathfrak A(a\to b),
\end{align*} which shows the $(Ha\to Hb)$-maximality of $\uparrow_{ \mathfrak{(A,B)}}(a\to b\lesssim Ha\to Hb)$. It remains to show that we cannot have
\begin{align*} 
    \uparrow_ \mathfrak A(a\to b)\ \cup\uparrow_ \mathfrak B(Ha\to Hb)\neq\emptyset \quad\text{whereas}\quad \uparrow_{ \mathfrak{(A,B)}}(a\to b\lesssim Ha\to Hb)=\emptyset.
\end{align*} This is a direct consequence of \prettyref{equ:ab_subseteq_HaHb} and the assumption that $\uparrow_ \mathfrak A(a\to b)=\emptyset$ implies $\uparrow_ \mathfrak B(Ha\to Hb)=\emptyset$.

Now assume $H$ is an isomorphism. The proof of \prettyref{equ:abHaHb} is essentially the same as the proof of the First Isomorphism Theorem in \citeA{Antic22}: 

If $\uparrow_\mathfrak A(a\to b)\ \cup \uparrow_\mathfrak B(Ha\to Hb)$ consists only of trivial justifications, we have $a\to b\lesssim Ha\to Hb$. 

Otherwise, there is at least one non-trivial justification $s\to t$ in $\uparrow_\mathfrak A(a\to b)$ or in $\uparrow_\mathfrak B(Ha\to Hb)$, in which case the Isomorphism \prettyref{l:IL} implies that $s\to t$ is in both $\uparrow_\mathfrak A(a\to b)$ \textit{and} $\uparrow_\mathfrak B(Ha\to Hb)$, which means that $(a\to b)\uparrow_\mathfrak{(A,B)}(Ha\to Hb)$ contains at least one non-trivial justification as well. 

We proceed by showing that $(a\to b)\uparrow_\mathfrak{(A,B)}(Ha\to Hb)$ is $(Ha\to Hb)$-maximal:
\begin{align*} 
    (a\to b)\uparrow_\mathfrak{(A,B)}(Ha\to Hb)
    &=\ \uparrow_\mathfrak A(a\to b)\ \cap \uparrow_\mathfrak B(Ha\to Hb)\\
    &\stackrel{ \prettyref{equ:ab=HaHb}}=\ \uparrow_\mathfrak A(a\to b)\\
    &\supseteq\ \uparrow_\mathfrak A(a\to b)\ \cap \uparrow_\mathfrak B(c\to d)\\
    &=(a\to b)\uparrow_\mathfrak{(A,B)}(c\to d),\quad\text{for every $c,d\in B$.}
\end{align*} This shows
\begin{align*} 
	a\to b\lesssim_\mathfrak{(A,B)} Ha\to Hb.
\end{align*} 

An analogous argument shows the remaining arrow proportions and thus \prettyref{equ:abHaHb}.
\end{proof}

\begin{remark} The Isomorphism \prettyref{t:FIT} shows that isomorphisms are \textit{\textbf{proportional analogies}} in the sense of \citeA{Antic22-4}.
\end{remark}

\begin{theorem}[Second Isomorphism Theorem]\label{t:SIT} For any isomorphism $H: \mathfrak{A\to B}$ and elements $a,b,c,d\in A$,
\begin{align*} 
    a:b\approx_ \mathfrak A c:d \quad\Leftrightarrow\quad Ha:Hb\approx_ \mathfrak B Hc:Hd,\quad\text{for all $a,b,c,d\in A$}.
\end{align*}
\end{theorem}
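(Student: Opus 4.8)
The plan is to push everything through the Isomorphism \prettyref{l:IL}, which for an isomorphism gives the \emph{equality} $\uparrow_ \mathfrak A(x\to y)=\ \uparrow_ \mathfrak B(Hx\to Hy)$ for all $x,y\in A$. By \prettyref{d:abcd} the relation $a:b\approx_ \mathfrak A c:d$ is nothing but the conjunction of four arrow-similarities of the form $\lesssim$, and each of these is defined purely in terms of such $\uparrow$-sets together with a maximality condition quantified over competing arrows. Hence it suffices to establish a single generic transfer statement, namely that $x\to y\lesssim_ \mathfrak A z\to w$ holds if and only if $Hx\to Hy\lesssim_ \mathfrak B Hz\to Hw$; the four-fold conjunction in \prettyref{d:abcd} then assembles coordinate-by-coordinate, and applying the equivalence to the four pairs $(a\to b,c\to d)$, $(b\to a,d\to c)$, $(c\to d,a\to b)$, $(d\to c,b\to a)$ yields the theorem.

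First I would record two consequences of $H$ being a bijection that respects $\uparrow$. First, a justification $s\to t$ is trivial in $Arr( \mathfrak A)$ iff it is trivial in $Arr( \mathfrak B)$: triviality means generalizing every arrow, and by \prettyref{l:IL} together with surjectivity of $H$, $s\to t$ generalizes every arrow of $Arr( \mathfrak A)$ precisely when it generalizes every arrow of $Arr( \mathfrak B)$. Second, for any two arrows one has $(x\to y)\uparrow_ \mathfrak A(z\to w)=(Hx\to Hy)\uparrow_ \mathfrak B(Hz\to Hw)$, since this is an intersection of two $\uparrow$-sets each preserved by \prettyref{l:IL}.

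With these in hand the two clauses in the definition of $\lesssim$ transfer directly. The ``only trivial generalizations'' clause for $\uparrow_ \mathfrak A(x\to y)\cup\ \uparrow_ \mathfrak A(z\to w)$ is preserved by the two observations above. For the maximality clause, $(x\to y)\uparrow_ \mathfrak A(z\to w)$ contains a non-trivial justification iff $(Hx\to Hy)\uparrow_ \mathfrak B(Hz\to Hw)$ does, again by the two observations; and $(z\to w)$-maximality transfers because the competitor arrows $z'\to w'$ ranging over $Arr( \mathfrak A)$ correspond bijectively, via $z'\mapsto Hz'$ and $w'\mapsto Hw'$, to the competitors ranging over $Arr( \mathfrak B)$, with every relevant inclusion between the sets $(x\to y)\uparrow(z'\to w')$ matched by the equality of $\uparrow$-sets. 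Each clause being an equivalence, the generic transfer statement follows, and the four coordinates of \prettyref{d:abcd} give the claim.

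The one place where I expect the real content to sit --- and where the hypothesis that $H$ is an \emph{isomorphism} rather than a mere homomorphism is indispensable --- is the transfer of maximality. Maximality is quantified over \emph{all} competing arrows in the target algebra, so to move it across $H$ I need every arrow of $Arr( \mathfrak B)$ to be of the form $Hz'\to Hw'$; this is exactly surjectivity of $H$. Were $H$ only a homomorphism, \prettyref{l:IL} would give merely the inclusion $\uparrow_ \mathfrak A(x\to y)\subseteq\ \uparrow_ \mathfrak B(Hx\to Hy)$, and there could be extra arrows in $Arr( \mathfrak B)$ witnessing a failure of maximality that have no preimage, breaking the equivalence. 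Injectivity of $H$ is what is needed to run the mirror-image argument for $H^{-1}$, which is again an isomorphism, and thereby close the biconditional.
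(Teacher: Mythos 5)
Your proof is correct and takes essentially the same approach as the paper: the paper's own proof consists of the single sentence that the theorem is ``a direct consequence of the Isomorphism \prettyref{l:IL}'', and your argument is precisely the careful unwinding of that consequence (equality of the $\uparrow$-sets, transfer of triviality and of intersections, and transfer of maximality via the bijective correspondence of competitor arrows). The details you supply, including the observation that surjectivity of $H$ is what makes the maximality clause transfer, are all sound.
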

\begin{proof} A direct consequence of the Isomorphism \prettyref{l:IL}.
\end{proof}

\begin{remark} The Second Isomorphism \prettyref{t:SIT} shows that isomorphisms are \textit{\textbf{proportional isomorphisms}} in the sense of \citeA{Antic22-4}, and \textit{\textbf{analogy-preserving functions}} in the sense of \citeA[Definition 6]{Couceiro23}. Notice that it is slightly different from the Second Isomorphism Theorem in \citeA{Antic22}.
\end{remark}

\section{Comparison}\label{§:Comparison}

In this section, we first recall the abstract algebraic framework of analogical proportions in \citeA{Antic22} and then compare it to the framework of this paper.

\begin{convention} We will always write $s\twoheadrightarrow t$ instead of $(s,t)$, for any pair of $LX$-terms $s$ and $t$ such that every variable in $t$ occurs in $s$, that is, $Xt\subseteq Xs$. We call such expressions \textit{\textbf{$LX$-rewrite rules}} where we often omit the reference to $L$. We denote the set of all $LX$-rewrite rules with variables among $X$ by $R_{L,X}$. 
\end{convention}

\begin{definition}[\citeA{Antic22}]\label{d:abcd-22} We define the \textit{\textbf{analogical proportion entailment relation}} as follows:
\begin{enumerate}
    \item Define the \textit{\textbf{set of rewrite justifications}} (or \textit{\textbf{r-justifications}}) of an \textit{\textbf{arrow}} $a\to b$ in $\mathfrak A$ by
    \begin{align*} 
        Jus_\mathfrak A(a\to b):=\left\{s \twoheadrightarrow t\in R_{L,X} \;\middle|\; a\to b=s^\mathfrak A\textbf{o}\to t^\mathfrak A\textbf{o},\text{ for some }\mathbf o\in A^{|\mathbf x|}\right\},
    \end{align*} extended to an \textit{\textbf{arrow proportion}} $a\to b\righttherefore c\to d$ --- read as ``$a$ transforms into $b$ as $c$ transforms into $d$'' --- in $\mathfrak{(A,B)}$ by
    \begin{align*} 
        Jus_\mathfrak{(A,B)}(a\to b\righttherefore c\to d):=Jus_\mathfrak A(a\to b)\cap Jus_\mathfrak B(c\to d).
    \end{align*} An r-justification is \textit{\textbf{trivial}} in $\mathfrak{(A,B)}$ iff it justifies every arrow proportion in $\mathfrak{(A,B)}$ and we again denote this set by $\emptyset_\mathfrak{(A,B)}$. 

    \item Now we say that $a\to b\righttherefore c\to d$ \textit{\textbf{holds}} in $\mathfrak{(A,B)}$ --- in symbols,
    \begin{align*} 
        a\to b\righttherefore_\mathfrak{(A,B)}\, c\to d
    \end{align*} iff
    \begin{enumerate}
        \item either $Jus_\mathfrak A(a\to b)\cup Jus_\mathfrak B(c\to d)=\emptyset_\mathfrak{(A,B)}$ consists only of trivial r-justifications, in which case there is neither a non-trivial relation from $a$ to $b$ in $\mathfrak A$ nor from $c$ to $d$ in $\mathfrak B$; or

        \item $Jus_\mathfrak{(A,B)}(a\to b\righttherefore c\to d)$ is maximal with respect to subset inclusion among the sets $Jus_\mathfrak{(A,B)}(a\to b\righttherefore c\to d')$, $d'\in B$, containing at least one non-trivial r-justification, that is, for any element $d'\in B$,
        \begin{align*} 
            \emptyset_\mathfrak{(A,B)}\subsetneq Jus_\mathfrak{(A,B)}(a\to b\righttherefore c\to d)&\subseteq Jus_\mathfrak{(A,B)}(a\to b\righttherefore c\to d')
        \end{align*} implies
        \begin{align*} 
            \emptyset_\mathfrak{(A,B)}\subsetneq Jus_\mathfrak{(A,B)}(a\to b\righttherefore c\to d')\subseteq Jus_\mathfrak{(A,B)}(a\to b\righttherefore c\to d).
        \end{align*} We abbreviate the above definition by simply saying that $Jus_\mathfrak{(A,B)}(a\to b\righttherefore c\to d)$ is \textit{\textbf{$d$-maximal}}.
    \end{enumerate}

    \item Finally, the analogical proportion entailment relation is most succinctly defined by
    \begin{align*} 
        a:b::_{ \mathfrak{(A,B)}}c:d \quad:\Leftrightarrow\quad 
            &a\to b\righttherefore_{ \mathfrak{(A,B)}}\, c\to d \quad\text{and}\quad b\to a\righttherefore_{ \mathfrak{(A,B)}}\, d\to c\\
            &c\to d\righttherefore_{ \mathfrak{(B,A)}}\, a\to b \quad\text{and}\quad d\to c\righttherefore_{ \mathfrak{(B,A)}}\, b\to a.
    \end{align*}
\end{enumerate}
\end{definition}

\begin{warning} A difference between similarity-based analogical proportions as defined in \prettyref{d:abcd} and analogical proportions as defined in \prettyref{d:abcd-22} is that the former operates with justifications of the form $s\to t$ where $s$ and $t$ are \textit{arbitrary} terms, whereas the latter operates with r-justifications of the form $s \twoheadrightarrow t$ where we require $Xt\subseteq Xs$.
\end{warning}

\begin{definition}\label{d:char_jus} We call a set $J$ of r-justifications a \textit{\textbf{characteristic set of r-justifications}} of $a\to b\righttherefore c\to d$ in $\mathfrak{(A,B)}$ iff $J$ is a sufficient set of r-justifications in the sense that
\begin{enumerate}
    \item $J\subseteq Jus_\mathfrak{(A,B)}(a\to b\righttherefore c\to d)$, and
    \item $J\subseteq Jus_\mathfrak{(A,B)}(a\to b\righttherefore c\to d')$ implies $d'=d$, for each $d'\in B$.
\end{enumerate} In case $J=\{s \twoheadrightarrow t\}$ is a singleton set satisfying both conditions, we call $s \twoheadrightarrow t$ a \textit{\textbf{characteristic r-justification}} of $a\to b\righttherefore c\to d$ in $\mathfrak{(A,B)}$.
\end{definition}

Define, for a term $s\in T_{L,X}$ and element $a\in A$, the set
\begin{align*} 
    \langle s,a\rangle_\mathfrak A:=\left\{\mathbf o\in A^{rs} \;\middle|\; a=s^\mathfrak A\textbf{o}\right\},
\end{align*} consisting of all solutions to the polynomial equation $a=s\mathbf x$ in $\mathfrak A$. We can now depict every r-justification $s \twoheadrightarrow t$ of $a\to b\righttherefore c\to d$ as follows:
\begin{center}
\begin{tikzpicture}[node distance=1cm and 0.5cm]
    \node (a)               {$a$};
    \node (d1) [right=of a] {$\to$};
    \node (b) [right=of d1] {$b$};
    \node (d2) [right=of b] {$\righttherefore $};
    \node (c) [right=of d2] {$c$};
    \node (d3) [right=of c] {$\to$};
    \node (d) [right=of d3] {$d$.};
    \node (s) [below=of b] {$s$};
    \node (t) [above=of c] {$t$};

    \draw (a) to [edge label'={$\langle s,a\rangle$}] (s); 
    \draw (c) to [edge label={$\langle s,c\rangle$}] (s);
    \draw (b) to [edge label={$\langle t,b\rangle$}] (t);
    \draw (d) to [edge label'={$\langle t,d\rangle$}] (t);
\end{tikzpicture}
\end{center} Moreover, we have
\begin{align}\label{equ: 230824-langle} 
    s \twoheadrightarrow t\in Jus( a\to b\righttherefore c\to d) \quad\Leftrightarrow\quad \langle s,a\rangle\cap\langle t,b\rangle\neq\emptyset \quad\text{und}\quad \langle s,c\rangle\cap\langle t,d\rangle\neq\emptyset.
\end{align} 

The following examples show that there are subtle differences to be expected between the two notions of analogical proportions in Definitions \ref{d:abcd} and \ref{d:abcd-22}:

\begin{example}\label{e:aabb} Consider the algebra $(\{a,b,c\},f)$ given by
\begin{center}
\begin{tikzpicture} 
    \node (a) {$a$};
    \node (b) [right of=a,xshift=1cm] {$b$};
    \node (c) [right of=b,xshift=1cm] {$c$};
    \draw[->] (a) to [edge label'={$f$}][loop] (a);
    \draw[->] (b) to [edge label={$f$}] (c);
    \draw[->] (c) to [edge label'={$f$}][loop] (c);
\end{tikzpicture}
\end{center} Intuitively, it is reasonable to say that ``$a$ is to $a$ what $c$ is to $c$'', but it appears debatable whether the same should hold for $b$ instead of $c$ since the relation of $b$ to itself looks different. In fact, we shall now prove
\begin{align}\label{equ:aabb} 
    a:a::b:b \quad\text{whereas}\quad a:a\not\approx b:b.
\end{align} The first proportion is a direct consequence of inner p-reflexivity of non-similarity-based analogical proportions (\prettyref{t:properties2022}) and it basically follows from the fact that $x \twoheadrightarrow x$ is its characteristic r-justification. Why is $x\to x$ \textit{not} a characteristic justification of $a:a\approx b:b$? The reason is that we now don't look only at the last $b$ in $a\to a\righttherefore b\to b$, but at the arrow $b\to b$ as a whole and $x\to x$ justifies \textit{every} arrow $d\to d$ for any $d\in B$ --- hence, it is in general not a characteristic justification! Now since we have
\begin{align*} 
    (a\to a)\uparrow(b\to b)=\left\{f^mx\to x \;\middle|\; m\geq 0\right\}\subsetneq\left\{f^m x\to f^n x \;\middle|\; m,n\geq 0\right\}=(a\to a)\uparrow(c\to c),
\end{align*} we infer the second relation in \prettyref{equ:aabb}.
\end{example}

Define
\begin{align*} 
	\mathbbm 1_\mathfrak A(s):=\{a\in A\mid |\langle s,a\rangle_\mathfrak A|=1\}.
\end{align*}

A key result in the framework of \prettyref{d:abcd-22} is the following (we use here the formulation in \citeA{Antic23-22}):

\begin{lemma}[Uniqueness Lemma, \citeA{Antic22}]\label{l:UL} We have the following implications:
\begin{prooftree}
	\AxiomC{$s \twoheadrightarrow t\in Jus_\mathfrak{(A,B)}(a\to b\righttherefore c\to d)$}
		\AxiomC{$c\in\mathbbm 1_\mathfrak B(s)$}
	\BinaryInfC{$a\to b\righttherefore_\mathfrak{(A,B)}\, c\to d$}
\end{prooftree} and
\begin{prooftree}
	\AxiomC{$s \twoheadrightarrow t\in Jus_\mathfrak{(A,B)}(a\to b\righttherefore c\to d)$}
		\AxiomC{$a\in\mathbbm 1_\mathfrak A(s)\qquad b\in \mathbbm 1_\mathfrak A(t)\qquad c\in \mathbbm 1_\mathfrak B(s)\qquad d\in \mathbbm 1_\mathfrak B(t)$}
        \RightLabel{.}
	\BinaryInfC{$a:b::_\mathfrak{(A,B)}c:d$}
\end{prooftree}
\end{lemma}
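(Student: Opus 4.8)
The plan is to reduce both implications to the criterion that a \emph{characteristic} (that is, sufficient) set of r-justifications, in the sense of \prettyref{d:char_jus}, forces the corresponding arrow proportion to hold. The whole argument is then driven by a single observation: a uniqueness hypothesis such as $c\in\mathbbm 1_\mathfrak B(s)$ collapses the solution set $\langle s,c\rangle_\mathfrak B$ to a singleton, and by \prettyref{equ: 230824-langle} this pins down the $t$-target of that solution uniquely.

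For the first implication I would show that $\{s\twoheadrightarrow t\}$ is already a characteristic set of r-justifications of $a\to b\righttherefore c\to d$. Condition (1) of \prettyref{d:char_jus} holds by hypothesis. For condition (2), suppose $s\twoheadrightarrow t\in Jus_\mathfrak{(A,B)}(a\to b\righttherefore c\to d')$ for some $d'\in B$; by \prettyref{equ: 230824-langle} this gives $\langle s,c\rangle_\mathfrak B\cap\langle t,d'\rangle_\mathfrak B\neq\emptyset$, and since $c\in\mathbbm 1_\mathfrak B(s)$ forces $\langle s,c\rangle_\mathfrak B=\{\mathbf o\}$ for a unique $\mathbf o$, we get $\mathbf o\in\langle t,d'\rangle_\mathfrak B$, i.e. $d'=t^\mathfrak B\mathbf o$. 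The same reasoning applied to the original $d$ yields $d=t^\mathfrak B\mathbf o$, whence $d'=d$. To convert this into $a\to b\righttherefore_\mathfrak{(A,B)} c\to d$: if $s\twoheadrightarrow t$ is non-trivial, then for every $d'$ with $\emptyset_\mathfrak{(A,B)}\subsetneq Jus_\mathfrak{(A,B)}(a\to b\righttherefore c\to d)\subseteq Jus_\mathfrak{(A,B)}(a\to b\righttherefore c\to d')$ the inclusion places $s\twoheadrightarrow t$ in $Jus_\mathfrak{(A,B)}(a\to b\righttherefore c\to d')$, so $d'=d$ by condition (2) and the required reverse inclusion is immediate; this is exactly $d$-maximality, so case (b) of \prettyref{d:abcd-22} applies. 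The only remaining situation, where $s\twoheadrightarrow t$ is trivial, is degenerate: a trivial r-justification together with $\langle s,c\rangle_\mathfrak B=\{\mathbf o\}$ forces $B=\{t^\mathfrak B\mathbf o\}=\{d\}$, and the proportion then holds vacuously.

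For the second implication I would deduce $a:b::_\mathfrak{(A,B)} c:d$ by verifying the four arrow proportions of \prettyref{d:abcd-22} one at a time, each via the first implication with the matching uniqueness hypothesis. The two \emph{forward} proportions $a\to b\righttherefore_\mathfrak{(A,B)} c\to d$ and $c\to d\righttherefore_\mathfrak{(B,A)} a\to b$ come out immediately: the former uses $c\in\mathbbm 1_\mathfrak B(s)$ exactly as above, and the latter uses $a\in\mathbbm 1_\mathfrak A(s)$ after noting that $Jus_\mathfrak{(A,B)}(a\to b\righttherefore c\to d)=Jus_\mathfrak A(a\to b)\cap Jus_\mathfrak B(c\to d)$ is symmetric under swapping the algebras, so $s\twoheadrightarrow t$ is also a justification in $\mathfrak{(B,A)}$. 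For the two \emph{inner-reversed} proportions $b\to a\righttherefore_\mathfrak{(A,B)} d\to c$ and $d\to c\righttherefore_\mathfrak{(B,A)} b\to a$, the natural move is to read off from $a=s^\mathfrak A\mathbf o,\,b=t^\mathfrak A\mathbf o$ and $c=s^\mathfrak B\mathbf p,\,d=t^\mathfrak B\mathbf p$ that the \emph{reversed} rule $t\twoheadrightarrow s$ justifies $b\to a\righttherefore d\to c$ with the same witnesses, and then to apply the first implication with $t$ now in the role of the source term, using $d\in\mathbbm 1_\mathfrak B(t)$ (respectively $b\in\mathbbm 1_\mathfrak A(t)$).

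The main obstacle is precisely this last step. The expression $t\twoheadrightarrow s$ is a legitimate $LX$-rewrite rule only when $Xs\subseteq Xt$, whereas the definition of a rewrite rule guarantees only $Xt\subseteq Xs$; in general the reversed rule is therefore unavailable, and the four uniqueness hypotheses do not by themselves force $Xs=Xt$. I thus expect the crux of the proof to be exhibiting, under all four uniqueness hypotheses, a characteristic r-justification of the inner-reversed arrows --- either by arguing that $Xs=Xt$ in the relevant configuration, or by producing a suitable substitute rule, or by checking that whenever $t\twoheadrightarrow s$ is not a rewrite rule the inner-reversed arrows fall into the degenerate trivial case of \prettyref{d:abcd-22}. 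Once such a characteristic r-justification is in hand, its conversion into the arrow proportion is identical to the argument given for the first implication, and assembling the four proportions yields $a:b::_\mathfrak{(A,B)} c:d$.
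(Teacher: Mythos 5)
First, a point of context: this paper does not actually prove the Uniqueness Lemma --- it is imported verbatim from \citeA{Antic22} as background for the comparison in \prettyref{§:Comparison}, so there is no in-paper proof to measure you against. Judged on its own terms, your argument for the \emph{first} implication is essentially right and is the natural one: $c\in\mathbbm 1_\mathfrak B(s)$ makes $\langle s,c\rangle_\mathfrak B$ a singleton $\{\mathbf o\}$, \prettyref{equ: 230824-langle} then forces $d'=t^\mathfrak B\mathbf o=d$ for any $d'$ justified by $s\twoheadrightarrow t$, so $\{s\twoheadrightarrow t\}$ is a characteristic set in the sense of \prettyref{d:char_jus} and $d$-maximality follows. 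The only soft spot is your dismissal of the case where $s\twoheadrightarrow t$ is trivial: you correctly deduce that $B$ collapses to $\{d\}$, but ``the proportion then holds vacuously'' is not automatic --- clause (b) of \prettyref{d:abcd-22} still asks $Jus_\mathfrak{(A,B)}(a\to b\righttherefore c\to d)$ to contain a non-trivial r-justification, and with $B$ a singleton that set equals $Jus_\mathfrak A(a\to b)$, which need not contain one. This is a degenerate corner that deserves either an explicit argument or an explicit appeal to the literal (vacuous) reading of clause (b).

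The genuine gap is in the second implication, and you have named it yourself without closing it. Establishing $a:b::_\mathfrak{(A,B)}c:d$ requires all four arrow proportions of \prettyref{d:abcd-22}, and your method delivers only the two ``forward'' ones ($a\to b\righttherefore_\mathfrak{(A,B)}c\to d$ via $c\in\mathbbm 1_\mathfrak B(s)$, and $c\to d\righttherefore_\mathfrak{(B,A)}a\to b$ via $a\in\mathbbm 1_\mathfrak A(s)$). For $b\to a\righttherefore_\mathfrak{(A,B)}d\to c$ and $d\to c\righttherefore_\mathfrak{(B,A)}b\to a$ you need a characteristic r-justification with source term evaluating to $b$ (resp.\ $d$) and target evaluating to $a$ (resp.\ $c$); the candidate $t\twoheadrightarrow s$ is an $LX$-rewrite rule only when $Xs\subseteq Xt$, which the convention $Xt\subseteq Xs$ does not supply and which the four uniqueness hypotheses do not force. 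Your three suggested escape routes (arguing $Xs=Xt$, finding a substitute rule, or showing the reversed arrows fall into the trivial case) are listed but none is carried out, and a direct maximality argument does not go through either: a non-trivial $u\twoheadrightarrow v\in Jus_\mathfrak{(A,B)}(b\to a\righttherefore d\to c)\cap Jus_\mathfrak{(A,B)}(b\to a\righttherefore d\to c')$ only forces $c=c'$ if $d\in\mathbbm 1_\mathfrak B(u)$, whereas you are given $d\in\mathbbm 1_\mathfrak B(t)$ for the one specific term $t$. Until the inner-reversed proportions are established, the second implication is unproved.
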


\begin{warning}\label{w:UL} Notice that \prettyref{e:aabb} shows that we cannot adapt the Uniqueness \prettyref{l:UL} to the similarity-based setting since the term $x$ in the characteristic justification $x \twoheadrightarrow x$ of $a\to a \righttherefore b\to b$ is injective, and $x\to x$ is a justification of $a\to a\lesssim b\to b$, whereas $a\to a\not\lesssim b\to b$ shows that $x\to x$ is \textit{not} a characteristic justification in the similarity-based setting.
\end{warning}




Let us now turn our attention to the properties of \prettyref{§:Properties}:

\begin{theorem}[\citeA{Antic22}, Theorem 28]\label{t:properties2022} The analogical proportion relation as defined in \prettyref{d:abcd-22} satisfies
\begin{itemize}
    \item p-symmetry,
    \item inner p-symmetry,
    \item inner p-reflexivity,
    \item p-reflexivity,
    \item p-determinism,
\end{itemize}  and, in general, it does not satisfy
\begin{itemize}
    \item central permutation,
    \item strong inner p-reflexivity,
    \item strong p-reflexivity,
    \item p-commutativity,
    \item p-transitivity,
    \item inner p-transitivity,
    \item central p-transitivity,
    \item p-monotonicity.
\end{itemize}
\end{theorem}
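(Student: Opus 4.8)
The plan is to separate the positive properties, which I would derive structurally from \prettyref{d:abcd-22} together with the Uniqueness \prettyref{l:UL}, from the negative ones, each refuted by a small (typically monounary) counterexample in the spirit of the disproofs in \prettyref{t:properties}.

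For the positive half, p-symmetry and inner p-symmetry can be read off directly from the four-fold conjunction defining $a:b::c:d$: interchanging the roles of $(\mathfrak A, a\to b, b\to a)$ and $(\mathfrak B, c\to d, d\to c)$ merely permutes the four constituent arrow proportions, giving p-symmetry, while replacing each arrow by its reverse gives inner p-symmetry. p-Reflexivity $a:b::a:b$ is immediate, since each required arrow proportion compares a justification set with itself and is therefore vacuously maximal. For inner p-reflexivity $a:a::c:c$ I would exhibit the rewrite rule $x\twoheadrightarrow x$ as a characteristic r-justification: it lies in $Jus(a\to a)\cap Jus(c\to c)$, and because the equation $c=x$ has the unique solution $c$ we have $c\in\mathbbm 1_\mathfrak B(x)$, so the Uniqueness \prettyref{l:UL} forces $a\to a\righttherefore c\to c$; the three remaining arrow proportions follow identically. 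This is exactly the step where the two frameworks diverge: here the maximality in \prettyref{d:abcd-22} ranges only over the \emph{last} coordinate $d'$, so $x\twoheadrightarrow x$ survives as characteristic, whereas in the similarity setting the whole target arrow $c'\to d'$ varies and $x\to x$ ceases to be characteristic (cf. \prettyref{e:aabb} and \prettyref{w:UL}). Finally, p-determinism splits into a backward direction that is a special case of inner p-reflexivity, and a forward direction in which $x\twoheadrightarrow x$ justifies $a\to a\righttherefore a\to a$ but not $a\to a\righttherefore a\to d$ for $d\neq a$, so that the latter justification set is strictly smaller and the $d$-maximality condition fails.

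For the negative half, I would supply one counterexample algebra per property, constructed as in \prettyref{t:properties}: a small unary algebra in which some competing target $c\to d'$ carries a strictly larger set of r-justifications than the nominal one, thereby breaking maximality, or in which the claimed conclusion simply has no non-trivial shared r-justification. Concretely, central permutation, strong inner p-reflexivity, strong p-reflexivity, and p-commutativity are each refuted by a three- or four-element monounary algebra exhibiting the requisite asymmetry; p-transitivity, inner p-transitivity, and central p-transitivity are broken by chaining two arrow proportions whose justification sets are individually maximal while the composite arrow admits no non-trivial shared r-justification with its endpoints; and p-monotonicity is refuted by the counterexample of \citeA{Antic22}, using the definition given there (it is not needed elsewhere in the present paper).

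The step I expect to be the main obstacle is the bookkeeping in the transitivity counterexamples, where one must simultaneously certify that each hypothesis proportion holds --- which means verifying $d$-maximality of a justification set against all competing targets $d'$ --- and that the conclusion fails, all while respecting the rewrite constraint $Xt\subseteq Xs$, since this constraint governs which terms qualify as r-justifications and hence which sets are being compared. Since the statement is precisely Theorem 28 of \citeA{Antic22}, I would indicate here only the characteristic r-justifications and counterexample algebras driving each case and defer the full verifications to that reference.
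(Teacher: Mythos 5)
Your proposal is correct and matches the intended argument: the paper itself states this result without proof, importing it verbatim as Theorem~28 of \citeA{Antic22}, and your reconstruction (characteristic r-justifications such as $x\twoheadrightarrow x$ combined with the Uniqueness \prettyref{l:UL} for the positive properties, small monounary counterexamples for the negative ones) is exactly the strategy of that reference, which the paper itself echoes when it remarks that the disproofs in \prettyref{t:properties} are ``similar to the corresponding proofs of Theorem 28 in \citeA{Antic22}.'' Your observation that $d$-maximality ranges only over the last coordinate $d'$ --- which is why $x\twoheadrightarrow x$ remains characteristic here but not in the similarity-based setting --- correctly identifies the point of divergence the paper highlights in \prettyref{e:aabb} and \prettyref{w:UL}, and deferring p-monotonicity to \citeA{Antic22} is appropriate since that property is not defined in the present paper.
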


\begin{warning} We see that the original framework of \citeA{Antic22} satisfies inner p-reflexivity, whereas the similarity-based framework does not (see \prettyref{t:properties}). Since the counterexample given in the proof of \prettyref{t:properties} is plausible, we interpret this discrepancy as a feature of the similarity-based approach of this paper.
\end{warning}



\section{Conclusion}

The purpose of this paper was to define analogical proportions in terms of the qualitative notion of algebraic similarity \cite{Antic23-2} within the general setting of universal algebra thus joining two concepts which are both at the center of analogy. We showed that most results in \citeA{Antic22} can be easily transferred. However, we have also seen that inner p-reflexivity $a:a\approx c:c$ fails in general (\prettyref{t:properties}) as justified by a reasonable counterexample in \prettyref{e:aabb} and that the key Uniqueness Lemma of \citeA{Antic22} may fail as well (\prettyref{w:UL}). In total, we have obtained a similarity-based framework of analogical proportions in the general setting of universal algebra with appealing mathematical properties --- most importantly, future results on algebraic similarity can be \textit{directly} applied to proportions as defined here.

\bibliographystyle{theapa}
\bibliography{/Users/christianantic/Bibdesk/Bibliography,/Users/christianantic/Bibdesk/Publications_J,/Users/christianantic/Bibdesk/Publications_C,/Users/christianantic/Bibdesk/Preprints,/Users/christianantic/Bibdesk/Submitted,/Users/christianantic/Bibdesk/Notes}
\if\isdraft1
\newpage

\section{Todos}

\todo[inline]{Finde Analogie zu FPT/UL}

\todo[inline]{$a\approx c$ und $b\approx d$ impliziert $a:b\approx c:d$ bzw. $a:b::c:d$?}

\fi
\end{document}